\DeclarePairedDelimiter{\ceil}{\lceil}{\rceil}
\begin{document}

\title{Systematic and Deterministic Graph Minor Embedding for Cartesian Products of Graphs}
\titlerunning{Systematic and Deterministic Embedding for Cartesian Products of Graphs}

\author{Arman Zaribafiyan        \and
        Dominic J. J. Marchand \and 
        Seyed Saeed Changiz Rezaei
}

\institute{ 1QB Information Technologies (1QBit) \\
              458-550 Burrard Street, Vancouver, BC, V6C 2B5, Canada\\
              Tel.: +1.646.820.8865\\ \\
              \email{arman.zaribafiyan@1qbit.com} \\
              \email{dominic.marchand@1qbit.com}\\
              \email{seyed.rezaei@1qbit.com} 
              }

%\date{Received: date / Accepted: date}
\date{}
% The correct dates will be entered by the editor

\maketitle

\begin{abstract}
The limited connectivity of current and next-generation quantum annealers motivates the need for efficient graph minor embedding methods. These methods allow non-native problems to be adapted to the target annealer's architecture. The overhead of the widely used heuristic techniques is quickly proving to be a significant bottleneck for solving real-world applications. To alleviate this difficulty, we propose a systematic and deterministic embedding method, exploiting the structures of both the specific problem and the quantum annealer. We focus on the specific case of the Cartesian product of two complete graphs, a regular structure that occurs in many problems. We decompose the embedding problem by first embedding one of the factors of the Cartesian product in a repeatable pattern. The resulting simplified problem consists in the placement and connecting together of these copies to reach a valid solution. Aside from the obvious advantage of a systematic and deterministic approach with respect to speed and efficiency, the embeddings produced are easily scaled for larger processors and show desirable properties for the number of qubits used and the chain length distribution. We conclude by briefly addressing the problem of circumventing inoperable qubits by presenting possible extensions of our method.
\keywords{graph minor embedding, Cartesian product, quantum annealing}
\end{abstract}
%%===================================================================
%%===================================================================

\section{Introduction}

The majority of the interesting combinatorial optimization problems are hard to solve. Graph similarity, graph partitioning, graph colouring, resource allocation, and scheduling problems are among those combinatorial optimization problems proven to be NP-hard \cite{Partitioning,Coloring,1QBit_JSP}. Many of these problems have significant real-world applications which make them especially interesting. For example, determining similarities between graphs is a challenging problem that occurs when comparing the structures of different molecules and is thus of great importance in drug design applications \cite{1QBit_GS}. Many of these problems can be formulated as quadratic unconstrained binary optimization (QUBO) problems, which can be solved on specialized quadratic solvers. 

One  type of quadratic solver which has garnered considerable attention in the past few years is the quantum annealing processor manufactured by D-Wave Systems \cite{DW_manufactured_Spin,Scalable_Arch}. In essence, the processor is a specialized quantum device that samples low-energy configurations of a set of Ising spin variables \mbox{$\mathbf s \in \{-1, +1\}^n$} that serve as quantum registers or qubits. It is designed to solve a binary input problem formulated as an Ising Hamiltonian specified by a pair $(\mathbf h, \mathbf J)$, where $\mathbf h\in \mathbb R^n $ is a vector of local fields acting on the spin variables, and $\mathbf J \in \mathbb R ^ {n\times n}$ is a symmetric matrix of quadratic couplings between these spins. The objective function to be minimized is specified by the energy $E(\mathbf s)$ of the spin configuration $\mathbf s$ and given by 
\begin{equation}\label{eq_Es}
E(\mathbf s) = E(\mathbf s, \mathbf h, \mathbf J) = \mathbf s^T \mathbf J\mathbf s + \mathbf s^T  \mathbf h.
\end{equation}
We note that an Ising problem can be formulated as a QUBO problem through a simple linear transformation. Therefore, the quantum annealer is equivalently considered to be a quadratic unconstrained binary optimizer that minimizes a quadratic objective function $\mathbf Q$ given by 
\begin{equation}\label{eq_Ex}
E(\mathbf x) = E(\mathbf x,  \mathbf Q) = \mathbf x^T \mathbf Q \,\mathbf x
\end{equation}
over the discrete configuration space of a set of qubits $\mathbf x \in \{0, 1\}^n$. The solver has limited connectivity between its qubits such that not all pairs can be coupled together. In other words, only a subset of the terms of $\mathbf J$ or $\mathbf Q$ are allowed to be non-zero. For this reason, the structure of the problem to be solved must be mapped to the architecture of the solver, a process called \emph{embedding} \cite{choi2011_minor_emb}. 

In this paper, we will treat both the input problem and the solver as graphs. An input problem of interest, formulated as either a QUBO or Ising problem, can be represented as a graph $G=(V, E)$, where $V$ is a set of vertices representing either the logical variables or physical qubits, and $E$ is a set of edges representing the interactions between them. For the case of an Ising problem, the vertices of this graph are the variables $s_1,\dots, s_n$, while the set of edges is created by adding one edge for each pair of vertices $s_i$ and $s_j$ for which $J_{ij}$ is non-zero. On the other hand, the processor's architecture is described by the hardware graph $\mathcal{C}$. This graph represents the available physical qubits or registers and shows how they are coupled together on the processor. The earlier D-Wave Two hardware graph (see Figure \ref{fig:Chip509}) has 512 physical qubits and each qubit is adjacent to at most 6 others. D-Wave's nearly regular hardware graph, the \emph{Chimera} graph, is denoted by $\mathcal{C}_{N,M,L}$ and constructed as an $N \times M$ grid of $K_{L,L}$ bipartite blocks, as defined in \cite{boothby2015_fast}.

In order to embed the desired Ising model into the processor, the graph $G$ should be a subgraph of graph $\mathcal{C}$. A mapping of the input graph to the target graph is called a \emph{direct embedding}. Seeking a direct embedding places stringent constraints on the size and connectivity of the input graph. Alternatively, we can seek a \emph{graph minor embedding}, which is a specific type of mapping where we further allow adjacent vertices of the target graph to be contracted into larger effective vertices, called chains. In this more general case, the graph $G$ should be a subgraph of a graph minor of $\mathcal{C}$. For a detailed description of graph minor embedding, see \cite{choi2011_minor_emb}. In simple terms, a chain is created through the  addition of strong penalty terms to the objective function such that the variables involved are forced to take the same value. In the Ising formulation, this is achieved by applying a strong ferromagnetic coupling between any two adjacent vertices $i$ and $j$ of $G$ in the same chain.

In the most general case, where no assumption is made about the input and target graphs, seeking a graph minor embedding is an NP-hard problem \cite{cai_2014_practical}. Practically, this means that as the size of the graphs increases, the problem of finding a valid embedding quickly becomes prohibitively computationally expensive. To avoid having to solve an NP-hard problem with each embedding, we could use the fact that the structure of the solver is usually known in advance (here, it is a Chimera graph structure possibly less a few inoperable qubits and couplers). This means that polynomial solutions to the embedding problem remain achievable. While such an exact method exists, its poor scaling still renders it unusable for graphs larger than 10 vertices \cite{Adler2011}. As a result, the most widely used embedding algorithms, such as the one  introduced by Cai, Macready, and Roy \cite{cai_2014_practical}, are heuristic in nature, compromising on embedding quality in order to achieve polynomial running times with a more favourable scaling. Even then, finding an embedding is typically very time consuming. This is further exacerbated by other limitations of the analogue quantum device which can lead to highly variable performance, depending on the quality of the embedding, prompting the need to run the heuristic multiple times in order to select the best solution. Although less than ideal, heuristic solutions have proven to be mostly satisfactory for the exploratory work conducted on previous-generation quantum annealers, provided that sufficient computation time could be allocated for embedding. With the recent introduction of a 1000-qubit annealer, however, we are quickly reaching the point where more-scalable solutions are needed. In fact, quantum annealing making the leap from a nascent technology of purely academic interest to a useful mainstream tool is conditional on the availability of fast embedding methods that will not eradicate any potential quantum speed-up. 

The most promising next-generation embedding methods should be systematic and scalable. It is unlikely that such properties will be attained for truly general approaches, and advances will come by exploiting not only the structure of the target architecture, but also the structure of specific problems.  We believe that so long as embedding is needed, although general approaches are useful, it is with application-specific and systematic graph embedding approaches that the full potential of quantum hardware will be realized. The path to better or faster embedding algorithms, therefore, lies in restricting the graph minor embedding problem to specific cases. The triangular embedding of complete graphs  \cite{choi2011_minor_emb}, later generalized by Boothby, King, and Roy's approach of fast clique embedding for complete graphs \cite{boothby2015_fast}, epitomizes this application-specific approach and creates a systematic embedding for fully connected problems on the Chimera graph architecture. The embeddings produced have equal-length chains and are general because any graph is a subgraph of a complete graph. Unfortunately, this approach is wasteful for applications that do not require a fully connected graph, limiting the size of problems embeddable with this method. 

The first step in devising new embedding methods is to identify a common structure across many problems that can be exploited advantageously. As we will show below, a recurring graph structure which appears in the quadratic formulation of many of the NP-hard optimization problems mentioned above is the Cartesian product of two graphs. Cartesian products, being both ubiquitous and highly structured, are attractive targets for the type of improved methods we are advocating. One of the main contributions of this research is the analytical identification of this regularity and structure in the QUBO problem formulation of important families of NP-hard optimization problems. One of the most important advantages of this contribution is that it enables us to reuse the found embeddings for problems with similar structures. The ability to reuse embeddings reduces the computational complexity of the embedding process to a one-time cost per family of problems.

In this paper, we describe a procedure for embedding a Cartesian product of two graphs into a Chimera graph. The vertex set of the Cartesian product $G_1 \square G_2$ of two graphs $G_1 = (V_1,E_1)$ and $G_2=(V_2,E_2)$ is the Cartesian product of the vertex sets of the individual graphs. In the resulting graph, two vertices $(v_1, v_2)$ and $(u_1, u_2)$ are adjacent if and only if $v_1 = u_1$ and $v_2$ is adjacent to $u_2$ \textbf{or} $v_2 = u_2$ and $v_1$ is adjacent to $u_1$. Denoting the adjacency matrices of graphs $G_1$ and $G_2$ by $A_1$ and $A_2$, and having $n_1 := |V_1|$ and $n_2 := |V_2|$, we can compute the adjacency matrix of $G_1\Box G_2$, that is, $A_{G_1\Box G_2}$, in terms of the adjacency matrices of $G_1$ and $G_2$ as follows:
\begin{equation}
A_{G_1\Box G_2} =  I_{n_1}\otimes A_{G_2} + A_{G_1}\otimes I_{n_2}
\end{equation}

For the sake of generality with respect to embedding, for the remainder of this paper, we look into the Cartesian product of complete graphs.

%%====================================================================================
%%====================================================================================
\section{Identifying the Cartesian product of complete graphs (CPCG)}

The Cartesian product of graphs can appear in many application-driven problems. To exploit this structure, however, we first need to either infer its presence from the problem's QUBO form or preserve it as we formulate the problem from the outset. An alternative structure-preserving QUBO problem formulation can be found in \cite{hedayat}. Very efficient algorithms for identifying Cartesian products in arbitrary graphs have been proposed. For example, Imrich and Iztok proposed an exact algorithm \cite{Imrich2007_cp_linear} with linear scaling in terms of the number of edges for both the running time and memory requirement by using a clever edge-labelling technique. Nevertheless, it is useful to look at how Cartesian products occur when formulating optimization problems where doubly indexed binary variables are used. This is what we consider below. 

Suppose we have a QUBO problem where the variables are doubly indexed binary variables $x_{ik}$, where $1 \leq i \leq N$ and  $1 \leq k \leq K$. Such a structure occurs, for example, in the $K$-way graph partitioning problem. We formulate this graph partitioning problem as follows. Given a graph $G = (V, E)$ with $N$ vertices,  we want to divide the vertex set into $K$ partitions, where $K$ is a positive integer, such that the sum of edges inside the partitions is maximized. Let $A$ be the adjacency matrix of the graph $G$ built from the edge set $E$. For every vertex $i$ and partition $k$, the optimization variable $x_{ik}$ is $1$ if vertex $i$ is in partition $k$, and $0$ otherwise. Furthermore, without loss of generality, we assume that $N$ is divisible by $K$, and we let $P = N/K$. The objective is to find the assignment of vertices to partitions (i.e., a 0-1 configuration of $x_{ik}$'s) that maximizes the number of intra-partition edges and satisfies the following constraints:
\begin{enumerate}
\item \textbf{Orthogonality constraint}: each vertex must be assigned to one and only one partition
\item \textbf{Cardinality constraint}: each partition must have the same number of vertices assigned to it
\end{enumerate}
We note that for the case where $N$ is not divisible by $K$, the second constraint is relaxed such that the size of each partition should not differ from all others by more than one vertex. This problem can be formulated as the following optimization problem:
\begin{align}
\max\limits_{\textbf{x}=\{x_{i k}\}} \:\: & \sum\limits_{k=1}^{K}\left(\sum\limits_{(i_1,i_2) \in E} x_{i_1 k}x_{i_2 k}\right)  \nonumber \\
\text{s.t.:} \:\: &\sum\limits_{i=1}^N x_{ik} = P, \quad\quad \forall \: 1 \leq k \leq K \nonumber \\
&\sum\limits_{k=1}^K x_{ik} = 1, \quad\quad \forall \: 1 \leq i \leq N \nonumber \\
&x_{i,k} \in \{0,1\} , \quad\quad \forall \: i, k
\end{align}

\begin{figure}[hbtp]
\centering
\includegraphics[scale=1.0]{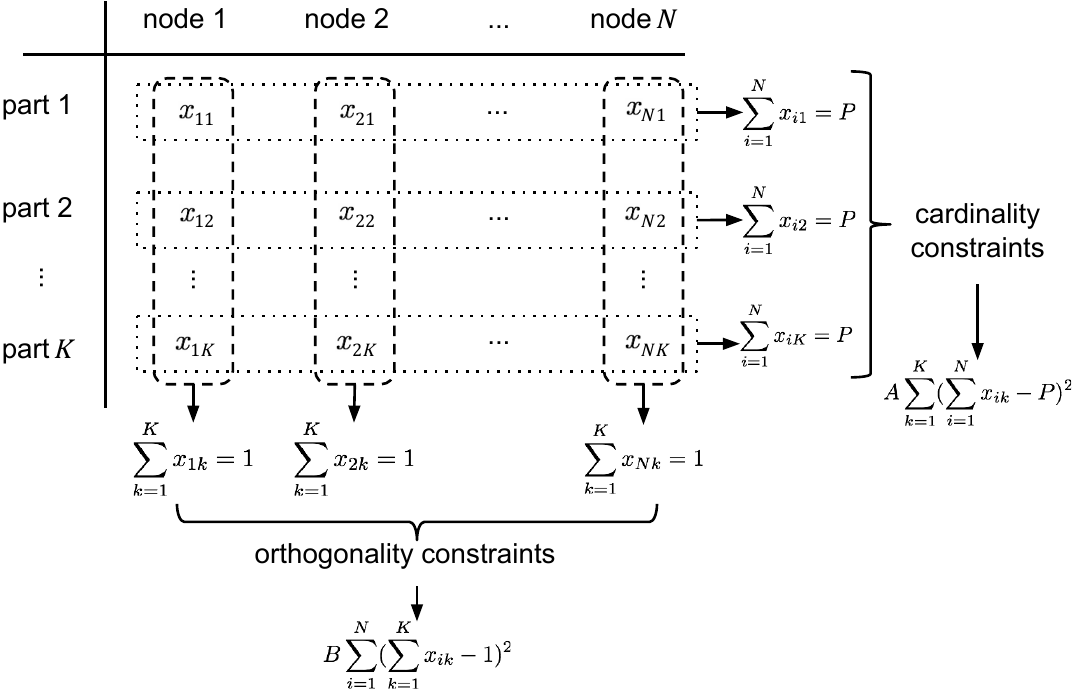}
\caption{A matrix representation of doubly indexed variables $x_{ik}$ in an example of a $K$-way partitioning problem on a graph with $N$ nodes.  The matrix representation illustrates how subsets of variables contribute to specific orthogonality and cardinality constraints.}
\label{fig:doubly_indexed}
\end{figure}

In order to formulate this problem as a QUBO problem appropriate for the annealer, we rewrite the objective function that needed to be maximized into an objective function to be minimized, and implement the equality constraints as quadratic penalty terms. The resulting QUBO problem is equivalent to the previous constrained optimization problem for appropriately chosen penalty constants $A$ and $B$:
\begin{align}
\min\limits_{\textbf{x}=\{x_{ik}\}} \Bigg[ \:\: - \sum\limits_{k=1}^{K}\sum\limits_{(i_1,i_2) \in E} x_{i_1k}x_{i_2 k} &+A\sum\limits_{k=1}^{K}\left(\sum\limits_{i=1}^N x_{ik} - P\right)^2  \\ \nonumber
&+B\sum\limits_{i=1}^{N}\left(\sum\limits_{k=1}^K x_{ik} - 1\right)^2 \Bigg]
\end{align}

The Cartesian product's structure is easily observed by constructing the QUBO problem graph for this partitioning problem. It can be built by reading the above QUBO objective function directly. The QUBO problem graph has a vertex for each doubly indexed binary variable, and an edge for each quadratic term of the objective function. The following quadratic terms are found:
\begin{enumerate}
\item For a fixed $k$, the first summation creates a quadratic term $x_{i_1k}x_{i_2 k}$ if $(i_1,i_2) \in E$.
\item For a fixed $k$, the second summation creates a quadratic term $x_{i_1 k}x_{i_2 k}$ for all $1 \leq i_1 < i_2 \leq N$ (after expanding the square of the sum).
\item For a fixed $i$, the third summation creates a quadratic term $x_{ik}x_{ik'}$ for all $1 \leq k < k' \leq K$ (after expanding the square of the sum).
\end{enumerate}
This correspondence between the quadratic terms and the edges in the QUBO problem graph results in the fact that any subset of vertices corresponding to a fixed $i$ or $k$ induces a complete graph on the problem graph. From this, we conclude that the resulting QUBO problem graph is a Cartesian product of two complete graphs $K_N \square K_K$. Figure \ref{fig:doubly_indexed} shows how grouping terms for a fixed partition $k$ and for a fixed vertex $i$ can assist in identifying the structure in the final QUBO formulation.

A similar argument can be used for any other input problem with doubly indexed variables to identify whether there exists a product graph structure in the resulting QUBO problem graph. In general, an input problem with doubly indexed variables where the objective function and constraints are defined on subsets of variables where one index is fixed will end up with QUBO problem graphs which are subgraphs of Cartesian products of complete graphs. Graph partitioning, graph colouring, and size-constrained clustering are important examples of such problems. In addition to these problems, Cartesian product structures have applications in error-correction for adiabatic quantum computation. Recent research has shown that using the Cartesian product of graphs as an error-correcting scheme reduces the time to solution for certain families of problems \cite{Lidar}.

\section{Description of CPCG Embedding}\label{sec:descr_CPCG}

We have mentioned that a systematic embedding relies in part on the regularity of the target graph's architecture. Our method is general and can be adapted to different architectures provided they can be described as a regular lattice of unit cells. Nevertheless, it will be convenient to restrict the presentation of our method to a specific case. The Chimera hardware graph is the obvious choice, as it describes the architecture of the only commercially available quantum annealer. The D-Wave Two processor uses a 512-qubit Chimera graph $\mathcal C_{8,8,4}$, and the newer D-Wave 2X uses a 1152-qubit Chimera graph $\mathcal C_{12,12,4}$.

We consider the Cartesian product of two complete graphs with sizes $m$ and $n$, that is, $K_m\Box K_n$, as the input graph. It is noteworthy that $K_m\Box K_n$ has $n$ distinct copies of $K_m$ as well as $m$ distinct copies of $K_n$ as induced subgraphs. 
We propose to first embed one copy of either $K_m$ or $K_n$, say $K_m$, into a repeatable unit which we call a \emph{nexus}. More precisely, a nexus consists of a collection of adjacent unit cells of the Chimera graph. The regularity of the grid architecture then allows for the embedding of $n$ copies of $K_m$ by simply placing one \emph{nexus instance}  on the grid for each of them. We are left with the problem of choosing the exact placement of these instances and connecting them together to realize the full Cartesian product. We call these inter-nexus connections \emph{buses} and their arrangement \emph{the bus configuration}. We have thus not only chosen a simpler high-level description of the original embedding problem, but also implicitly decomposed the problem into two subproblems: the nexus selection, and the nexus instances and bus configuration. We will now look at these subproblems in more detail and describe how they can be implemented to also achieve a scalable embedding strategy with advantageous properties.

\begin{figure}
\centering     
\subfigure[A possible nexus choice for a $K_8$ graph]{\includegraphics[scale=.273]{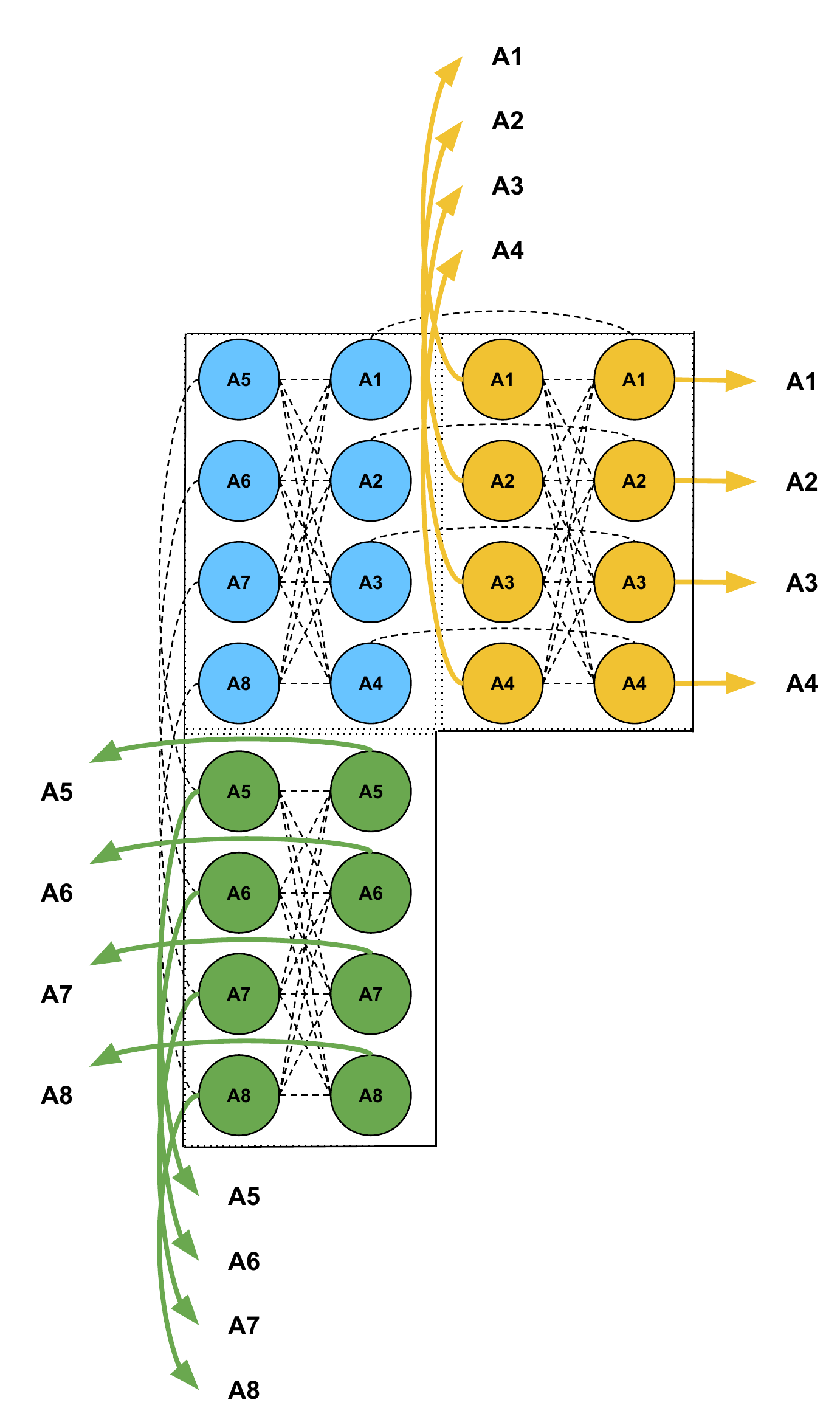} }
\subfigure[An embedding for $K_8 \square K_7$]{\includegraphics[scale=.55]{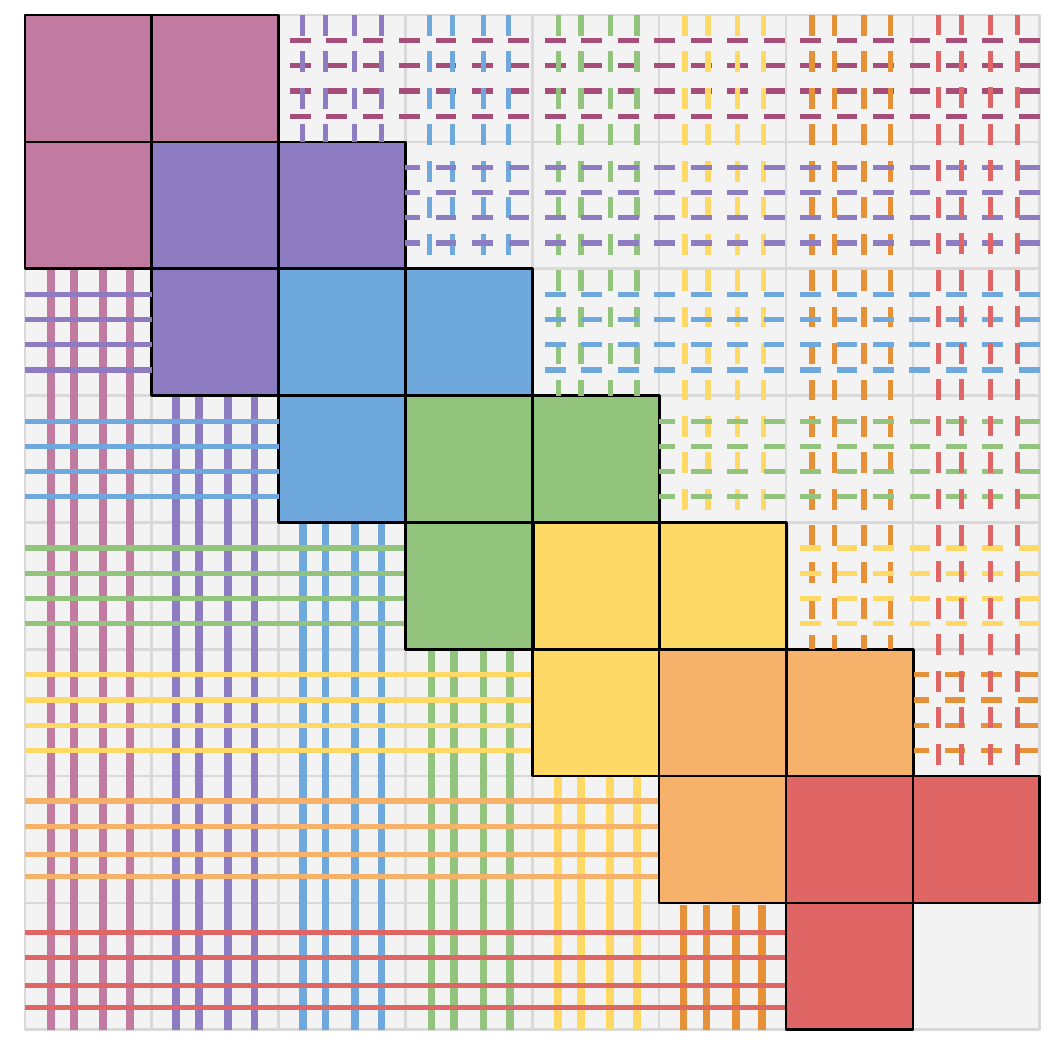} }
\caption{(a) A possible nexus choice for a $K_8$ graph on an ideal Chimera chip with $L=4$ using the triangular embedding method. (b) A valid CPCG embedding for the product $K_8 \square K_7$ using the nexus shown in (a). The embedding is shown for a Chimera target graph $\mathcal{C}_{8,8,4}$.}
\label{fig:nexus_emb_and_bus_config}
\end{figure}

We use the specific case of embedding $K_8\Box K_n$ on $\mathcal C_{N,N,4}$ as a demonstration. Figure \ref{fig:nexus_emb_and_bus_config}a shows how three adjacent unit cells of the Chimera graph, along with their couplers, are used for embedding $K_8$, and constitutes our preferred choice of nexus for hosting $K_8$. This embedding is essentially the same as the default triangular embedding for $K_8$ \cite{Scalable_Arch}. Figure \ref{fig:nexus_emb_and_bus_config}b illustrates an embedding pattern based on this choice of nexus and how buses were placed to realize $K_8\Box K_7$ on $\mathcal C_{8,8,4}$. 

\subsection{Nexus selection}

The nexus shape depends on the structure of the graph to be embedded as well as on the Chimera graph's architecture. Embedding a nexus can be viewed as a much smaller embedding problem with some added constraints pertaining to providing the appropriate connections to all variables through dedicated \emph{bus interfaces}. The definition of these interfaces allows some level of encapsulation by abstracting away the details of the nexus embedding for the rest of the method. On the target architecture, an interface is a high-level object that stands for the couplers coming out of the nexus and an indication of which variables are attached to them. In the high-level description, on the other hand, it serves as an attachment point for a bus extending a specific set of variables. Redundant interfaces can be defined if needed. It is important to note that the definition of the nexus interfaces will determine the optimal nexus placement and bus configuration. We can therefore seek a nexus, and then list the interfaces available, or we can require a set of interfaces as a constraint. To demonstrate the method, we require only that all variables be accessible through an interface, leaving more-advanced considerations for future work.

Solving the limited problem of finding a valid nexus could potentially benefit from other known embedding techniques, including heuristics methods. For the case at hand, embedding a complete graph on as few Chimera blocks as possible, the triangular embedding algorithm proposed in \cite{Scalable_Arch} represents an attractive option. This choice exploits the Chimera graph's large automorphism group (see \cite{cai_2014_practical}) and its resulting high level of symmetry. Figure \ref{fig:nexus_emb_and_bus_config}a shows that for a Chimera graph with $L=4$, three adjacent unit cells are sufficient to build a nexus for $K_8$. The bipartite nature of the Chimera block naturally partitions the set of variables corresponding to the nodes of $K_8$ into two subsets, each having four redundant interfaces: two that face downward and to the left, and two that face upward and to the right. For example, we partition the corresponding vertex set $\{A_1,\dots,A_8\}$ of the $K_8$ nexus shown in Figure \ref{fig:nexus_emb_and_bus_config}a into two subsets $\{A_1,A_2,A_3,A_4\}$ and $\{A_5,A_6,A_7,A_8\}$.  

Each subset of variables is assigned a number of redundant interfaces available for building the inter-nexus connections. An interface is therefore a set of connection points, called \emph{terminals}, corresponding to the variables of the subset and placed on a specific face of the nexus. For example, in Figure \ref{fig:nexus_emb_and_bus_config}a, a vertical bus connects to an interface representing the subset of variables $\{A_5,A_6,A_7,A_8\}$ and extend them downward, and a horizontal bus connects to a second interface on the same subset and extends them leftward. Similarly, two buses extends the subset $\{A_1,A_2,A_3,A_4\}$ in the opposite directions.

\subsection{Nexus instance placement and bus configuration}

We have slightly simplified the embedding problem by introducing a high-level description involving nexuses and buses.  We now need to solve that high-level problem. Fortunately, the number of degrees of freedom has been greatly reduced compared to that of the original problem. A tailored search algorithm can be implemented based on tabu search or simulated annealing, for example. The allowed steps or updates in configuration spaces are easily derived from the target architecture and its symmetries. We leave such a general solution for future work, however, and focus on a systematic method that works very well for embedding complete graphs on the Chimera architecture, inspired in part by the rooks problem \cite{kaplansky1946}.

We call the area of the Chimera graph not occupied by nexus instances the \emph{bus space}. First introduced near the beginning of Section \ref{sec:descr_CPCG}, a \emph{bus}, more precisely, is a set of parallel paths leaving from a nexus interface, with one path per variable (see Figure \ref{fig:nexus_emb_and_bus_config}b). Each path is assigned to a specific variable. Two buses can be linked together at a bus \emph{junction}. Locating the nexus instances hosting multiple copies of the complete graph $K_m$ on the diagonal of the Chimera graph divides the bus space into disjoint bus spaces. A unit cell of the Chimera graph where two buses meet can be used as a junction. 

The Chimera graph's bipartite structure and the proposed triangular embedding naturally invite a partitioning of the variables of $K_m$ into two subsets. We therefore seek a placement of the nexus instances along a line that would divide the bus space into two bus subspaces, providing access to both subspaces to each nexus instance. The L-shape of the nexus also lends itself to a more efficient tiling if we place these instances along the diagonal of the Chimera graph. 

Next, we need to extend the nexus interfaces to build the connectivity required by the full Cartesian product. For each subspace, this implies connecting each nexus instance through one of its interfaces to an interface of each other nexus instance in the same subspace. A valid configuration inspired by the rooks problem is to attach both a vertical and a horizontal bus that run to the edge of the chip. This creates a rectilinear grid where each pair of nexus instances meet at a single unit cell. We use each of the created disjoint bus spaces to establish the required connections for copies of $K_n$ in $K_m\Box K_n$. This embedding of copies of $K_n$ is achieved in a distributed manner through the buses, as opposed to the copies of $K_m$ that are fairly localized and encapsulated in a nexus. 

We note that when attempting to embed $K_m\Box K_n$, it may happen that using the nexus for one of the complete graphs does not result in a valid embedding, while the other choice gives an appropriate embedding of the product. We simply choose the most promising graph and call it $K_m$ without loss of generality.

%%====================================================================

\section{Discussion}\label{sec:discussion}

In this section, we show the clear advantage of the CPCG embedding method, \emph{CPCG Embedding}, over other embedding algorithms with respect to embedding success rates and the quality of the embeddings achieved. We then prove for a specific case that CPCG Embedding is optimal with respect to the largest embeddable problem, before commenting on the scaling of the running time of the method. We begin by assuming the Chimera structure is perfectly regular (i.e., it has no inoperable qubits or couplers). The effect of irregularities is investigated in the next section. 

\subsection{Comparison to other embedding algorithms}

To showcase the advantages of our method, we compare it to the de facto heuristic method introduced by Cai, Macready, and Roy \cite{cai_2014_practical}. The implementation of this embedding algorithm, \texttt{find\_embedding()}, is distributed with D-Wave's API and software tools. This function receives both the problem to be embedded and the target solver's graph as inputs, making no assumptions about either of them. Given the NP-hardness of the embedding problem and the poor scaling of the polynomial methods when fixing the target graph,  \texttt{find\_embedding()} remains the only viable truly general alternative. The generality of the heuristic approach also has some added benefits when inoperable qubits are present, a topic that will be discussed in the next section. Since the Cartesian product of graphs $K_m$ and $K_n$ is a subset of a complete graph $K_{mn}$, the triangular systematic embedding method \cite{choi2011_minor_emb} provides a simple, yet wasteful, approach to embedding Cartesian products and will therefore serve as our second touchstone. 

Our comparison will be restricted to the specific case of embedding Cartesian products of the form $K_8 \square K_n$ into a square Chimera target architecture  $\mathcal C_{N, N, 4}$ made of bipartite blocks of 8 qubits ($K_{4,4}$). The \texttt{find\_embedding()} method is a multi-start heuristic with a number of parameters to be specified. The algorithm will keep searching until a valid embedding is found or until it reaches one of its stopping criteria. The most important parameter is the maximum running time allowed for the search, which we set to one of 1, 100, or 1000 seconds. Each restart of the search is initiated when a maximum number of steps is reached without observing an improvement. We leave this at its default value of 10 steps. We further ensure that the search is not stopped prematurely (i.e., before the maximum running time) by setting the maximum number of restarts to a large value (e.g., \mbox{10,000} restarts given that each one takes at least 1 second). 

\begin{figure}[htb!]
\begin{centering} 
\includegraphics[width=\textwidth]{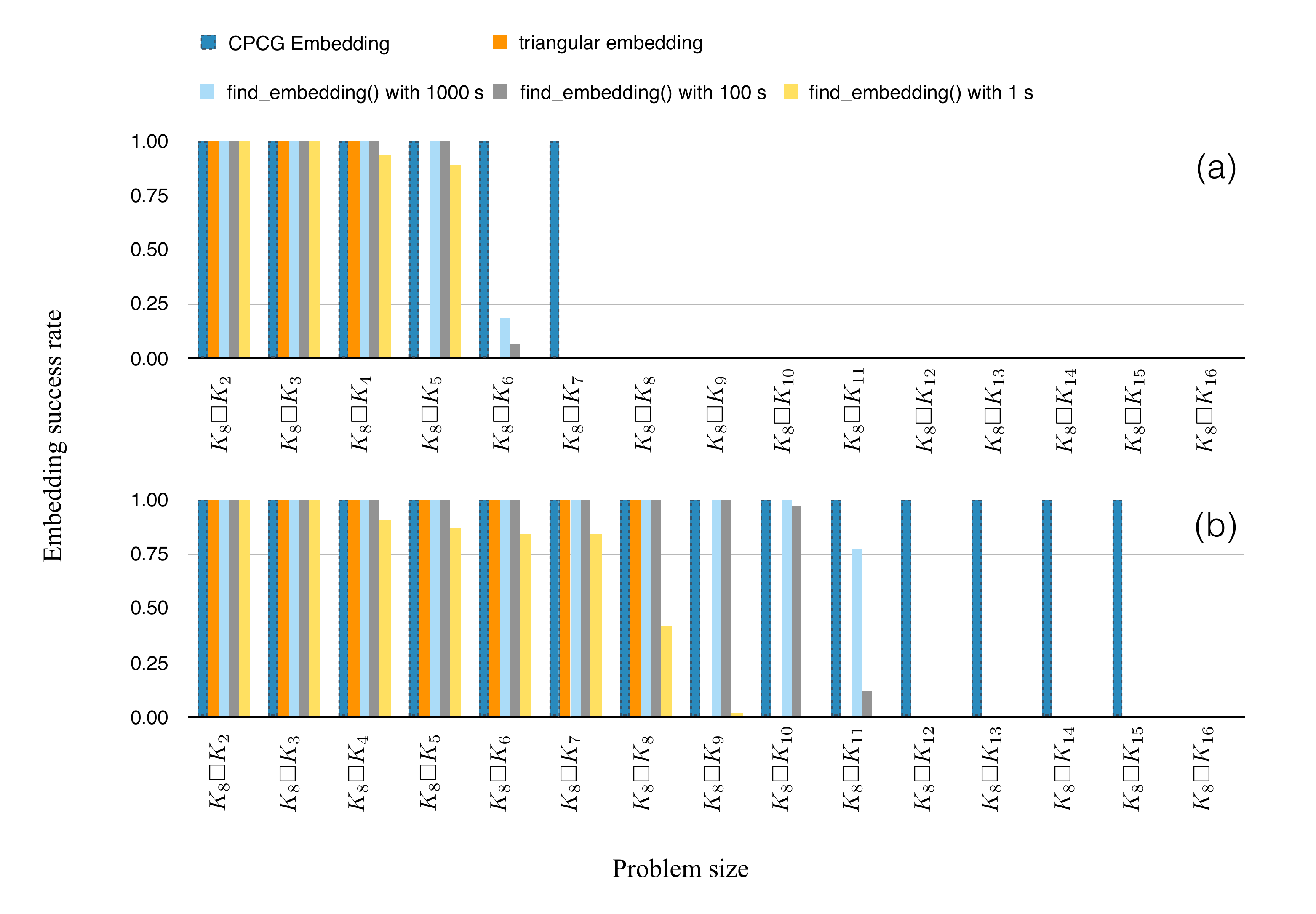}
\par\end{centering}
\caption{The embedding success rate for embedding Cartesian products of complete graphs using \texttt{find\_embedding()} (for 1000 seconds in light blue, 100 seconds in grey, and 1 second in yellow), the full triangular embedding (orange), and our systematic CPCG Embedding (dark blue) for the case of $K_8 \square K_n$ as a function of $n$. The last two are assumed to be produced in much less than a second. Panel (a) shows results for the ideal case of the previous chip's 512-qubit architecture $\mathcal{C}_{8,8,4}$, and panel (b) shows results for a hypothetical ideal 2048-qubit Chimera architecture $\mathcal{C}_{16,16,4}$.}
\label{fig:emb_figure}
\end{figure}

In our comparison, we first consider the embedding success rate of the various methods as shown in Figure \ref{fig:emb_figure}. In the case of CPCG Embedding and the triangular embedding, the success rate is simply 1.0 for all sizes smaller than some maximal size, which we can express as a function of the size $N$ of a square Chimera graph $\mathcal C_{N,N,4}$. CPCG Embedding can embed up to $K_8 \square K_{N-1}$, which means that $n=7$ is the largest case with a success rate of 1.0 for a 512-qubit chip, and $n=15$ is the largest case with that success rate for a hypothetical next-generation 2048-qubit chip. Beyond these sizes the success rate is 0. Similarly, for triangular embedding, we can embed up to $K_8 \square K_{N/2}$, which results in a success rate of 1.0 for $n \le 4$ ($n \le 8$) into a 512-qubit (2048-qubit) chip, and 0 otherwise. Since the results obtained from \texttt{find\_embedding()} are probabilistic, we attempt to embed each problem size 100 times for each maximum running time considered. For short running times, we find a satisfactory success rate only for the smallest problem sizes. We can increase that probability somewhat by increasing the running time, but even a generous 1000 seconds will not be sufficient to embed the largest Cartesian products achievable with CPCG Embedding.  Aside from the obvious effect of the poor scaling of the running time of \texttt{find\_embedding()} on the success rate, we also observe the limiting effect of the target chip's size for a fixed running time. As we get closer to the maximum embeddable problem size for a specific chip size, the success rate is further reduced. The product $K_8 \square K_6$, for example, is easily embeddable into a 2048-qubit chip, but only succeeds 18\% of the time with a 1000-second running time on the 512-qubit chip. With limited chip size also comes a limited number of valid solutions, so the probability of finding a valid solution is lower. In other words, the success rate obtained with the \texttt{find\_embedding()} method will get worse as the technology scales and we begin to address larger problem instances, but even more so when we test the limits of a specific architecture. CPCG Embedding is clearly the superior choice for a perfect Chimera chip (i.e., one where all qubits are operable), as it can embed products far larger than the two alternatives in a very short time. We discuss the scaling of running time in more detail in Section \ref{sec:runtime}. In fact, we can even show in some cases that CPCG Embedding can embed the largest possible Cartesian product of complete graphs embeddable for a target chip size (see the next section on the discussion of optimality).    

Beyond the ability to embed a problem into a chip, the quality of the embedding is paramount. Benchmarking for various types of optimization problems can show a difference of a few orders of magnitude between different embeddings of the same problem. At this point, there exists no single first-principle metric to rate embedding quality. Empirical ratings such as the metric used in \cite{Rieffel2014} represent the most-practical embedding quality metric at this point. Nevertheless, quantum annealing practitioners have used the number of physical qubits or the length of the longest chain as a conjectural measure of the embedding quality \cite{cai_2014_practical}. It has also been suggested that having heterogeneous chain lengths in an embedding is disadvantageous since the chains tend to exhibit unpredictable chain dynamics throughout the annealing process \cite{boothby2015_fast,Fully_connected_spin_glass}. Clarifying the relative role of these various properties in determining the quality of an embedding is beyond the scope of this paper, so we will limit our comparison to the traditional indicators by comparing the number of physical qubits and the chain length distribution of the CPCG Embedding with the other alternatives. 

\begin{figure}[h!]
\begin{centering}
\includegraphics[width=\textwidth]{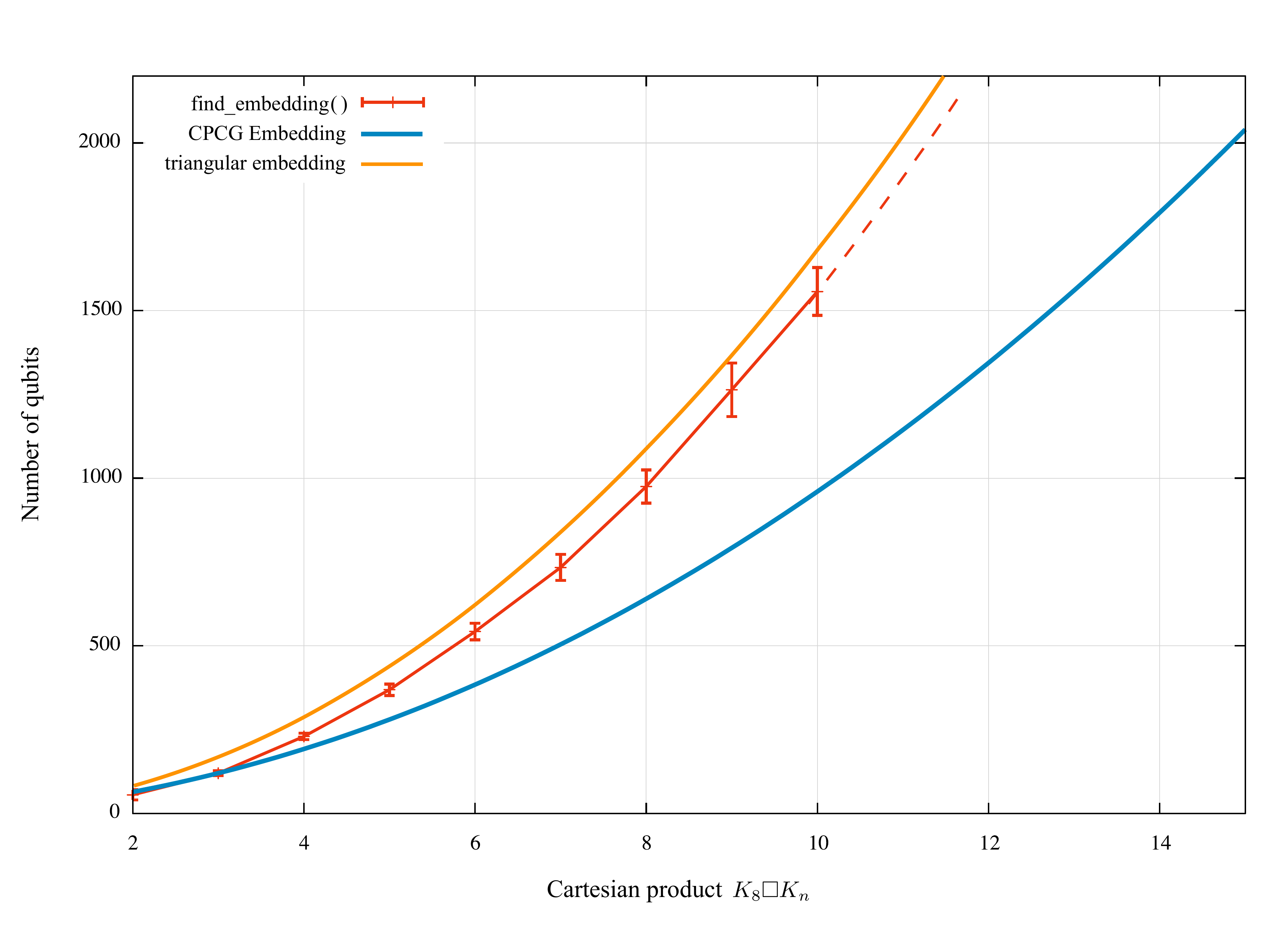}
\par\end{centering}
\caption{Average number of qubits used for embedding Cartesian products of complete graphs for the case of $K_8 \square K_n$ as a function of $n$ into a 2048-qubit architecture $\mathcal{C}_{16,16,4}$. Results are shown for D-Wave's \texttt{find\_embedding()} heuristic (in red), the full triangular embedding (orange), and our systematic CPCG Embedding (dark blue). A fit for the averaged number of qubits used in embeddings produced by \texttt{find\_embedding()} and given by $16.63n^2 -11.01n+5.75$ is shown with a dashed red line.}
\label{fig:num_qubits_figure}
\end{figure}

\begin{figure}[h!]
\begin{centering}
\includegraphics[width=\textwidth]{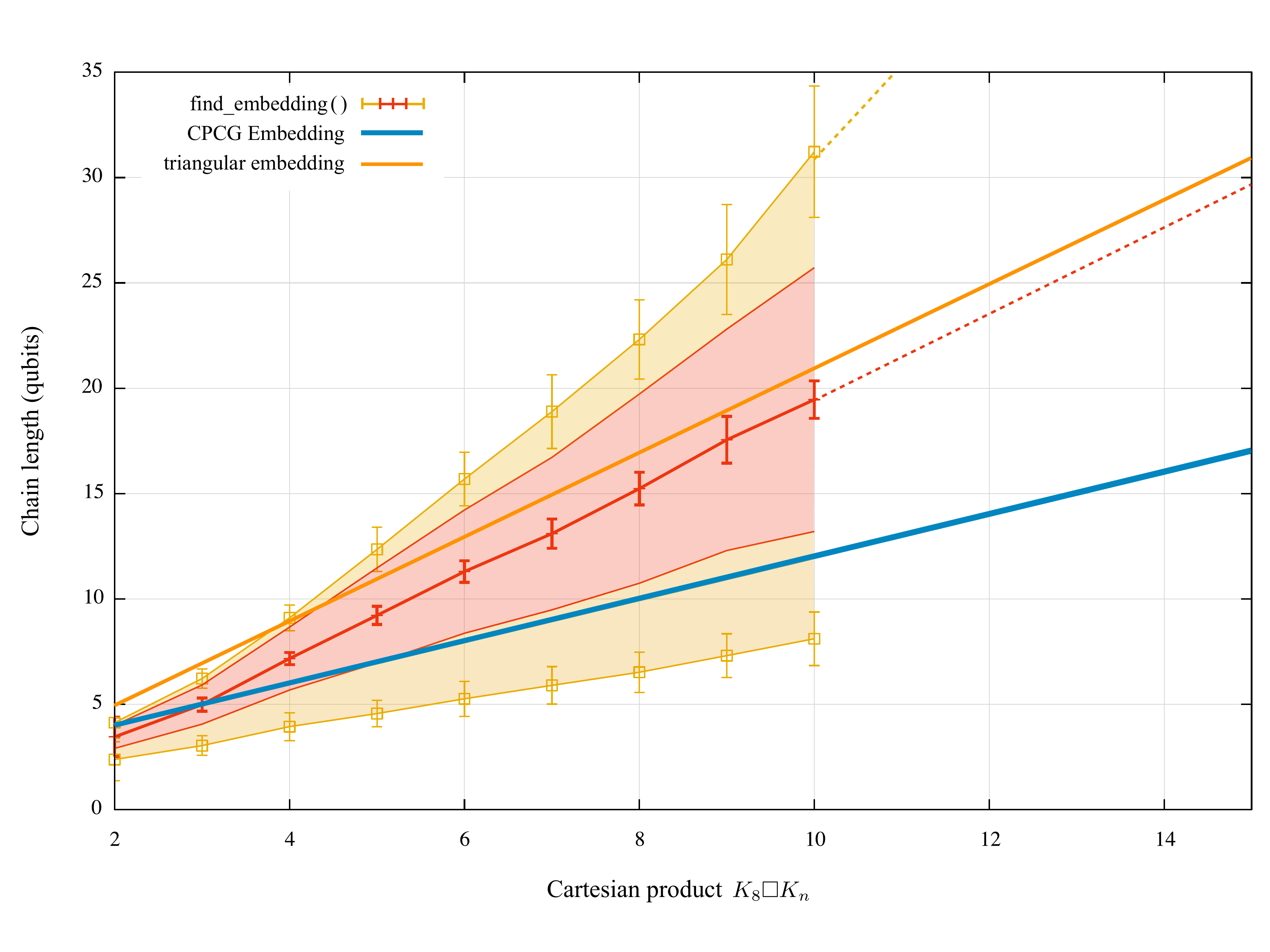}
\par\end{centering}
\caption{Chain length for embedding Cartesian products of complete graphs for the case of $K_8 \square K_n$ as a function of $n$ into a 2048-qubit architecture $\mathcal{C}_{16,16,4}$.  Results are shown for D-Wave's \texttt{find\_embedding()} heuristic (in red and yellow), the full triangular embedding (orange), and our systematic CPCG Embedding (dark blue). The latter two produce embeddings with chains that are all of equal length, shown as a single line. The spread of chain lengths produced by \texttt{find\_embedding()} is illustrated by averaging the mean (central red line), maximum (upper yellow line), and minimum (lower yellow line) chain length over 100 embeddings. The average standard deviation (also in red) of the chain length is also shown such that the red shaded region illustrates where 65\% of the chains can typically be found. A fit for the averaged maximum chain length is given by $0.12n^2+1.91n-0.48$ (dashed yellow line) and a fit for the averaged mean chain length is given by $2.05n-1.06$ (dashed red line).}
\label{fig:num_chain_figure}
\end{figure}

The number of physical qubits used is shown in Figure \ref{fig:num_qubits_figure}, and the chain length distribution is shown in Figure \ref{fig:num_chain_figure}. CPCG Embedding for an ideal Chimera graph for $K_8 \square K_n$ produces chains of length $n+2$. With $8n$ logical variables, the embedding uses a total of $8n(n+2)$ physical qubits. In comparison, a triangular embedding for a complete graph $K_{8n}$ has chains of length $2n+1$ for a total number of physical qubits used equal to $8n(2n+1)$. This is twice that of CPCG Embedding in the asymptotic limit. Both of these embedding methods produce equal-length chains. The \texttt{find\_embedding()} method, on the other hand, produces a spread of chain lengths for each successful embedding found. To illustrate this distribution, we average the mean, the minimum, and the maximum chain lengths over the 100 embeddings found. The average standard deviation is also shown such that 65\% of the chains produced are found in the red shaded region. Results depend only marginally on the maximum running time, provided that it is long enough to find a valid embedding, so we allowed for a generous 1000 seconds. A quadratic function fit of the averaged number of qubits used is given by $16.63n^2 -11.01n+5.75$, and a fit for the averaged mean chain length is given by $2.05n-1.06$. In the asymptotic limit, therefore, CPCG embedding produces chains that are less than half of the mean length produced by \texttt{find\_embedding()}. Consequently, we observe that the required number of qubits is also less than half that of the number of qubits required by \texttt{find\_embedding()}. Although \texttt{find\_embedding()} found some embeddings for $K_8 \square K_{11}$, the statistics are not shown as they were artificially skewed due to the smaller number of embeddings found. 

We find that CPCG Embedding behaves and scales favourably compared to both the heuristic method of \texttt{find\_embedding()} and the systematic triangular embedding. It can embed larger products on chips of the same size while producing shorter chains of equal length. 

\subsection{Discussion of optimality}

Having shown that CPCG Embedding compares favourably against other techniques, 
in the following theorem we  prove its optimality in certain  cases.

\begin{theorem}
Let $N$ be the smallest number such that CPCG Embedding can embed $K_m \square K_n$ 
into the square Chimera graph architecture $\mathcal{C}_{N,N,L}$.
If one of $m$ and $n$ is divisible by $2L$ and the other one is odd, then this embedding is optimal in the sense that $K_m \square K_n$  cannot be embedded into a smaller square Chimera graph.
\end{theorem}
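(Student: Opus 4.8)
The plan is to prove two matching bounds. The upper bound is supplied by the construction of Section~\ref{sec:descr_CPCG}: I would first read off from the nexus-plus-bus layout the exact side length $N$ for which CPCG Embedding succeeds, as a closed form in $m$, $n$, and $L$ (for the running case $m=2L$ the construction gives $N=n+1$, with chains of length $n+2$). All the content then lies in the lower bound, namely that $K_m\square K_n$ is \emph{not} a graph minor of the strictly smaller grid $\mathcal{C}_{N-1,N-1,L}$ under any embedding, CPCG or otherwise. Throughout I would invoke the commutativity $K_m\square K_n=K_n\square K_m$ to fix, without loss of generality, the factor divisible by $2L$ as $K_m$ and take $n$ odd.

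For the lower bound I would exploit the induced subgraph structure noted earlier. Since disjoint guest subgraphs force disjoint qubit sets, the $n$ vertex-disjoint copies of $K_m$ partition the used qubits into $n$ pairwise-disjoint chain-bundles, and these must be mutually interconnected because each pair of copies is joined by the $m$ edges $(a,k)$--$(a,k')$. I would then use a cut argument: split the $n$ copies into two groups of sizes $(n-1)/2$ and $(n+1)/2$ (here the oddness of $n$ fixes this as the unique nearly-balanced split). Every crossing guest edge is a $K_n$-edge, so there are exactly $m\,\tfrac{n^2-1}{4}$ of them, and in any embedding each must be realized by a host coupler leaving the qubit set that carries one group. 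The essential Chimera fact is that only inter-cell couplers contribute to this boundary, that every qubit carries at most two of them, and that the horizontal and vertical couplers are borne by \emph{disjoint} qubit types. Counting these two coupler budgets separately, against the total area $2L(N-1)^2$, is meant to convert the crossing-edge count into a lower bound on $N-1$. The hypothesis $2L\mid m$ is what makes this tight: because each cell supplies exactly $2L$ qubits, a clique whose order is an exact multiple of the cell size saturates the cells it occupies and leaves no slack for a cleverer packing, so the per-copy and per-bundle estimates hold with equality. Modelling the mutual-reachability requirement as a rook-type placement on the cell grid then forces the $(n+1)/2$-vs-$(n-1)/2$ imbalance to consume the final unit that separates $N-1$ from the CPCG value $N$.

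I expect this geometric lower bound to be the main obstacle, and I would flag two specific dangers. First, one must rule out non-structured embeddings: chains that meander across the chip, or copies of $K_m$ that interleave rather than localize, could \emph{a priori} undercut the clean diagonal layout. Ruling them out is where the argument must do real work, because the cheap invariants are all far too weak here---a raw qubit count, an edge count, and even the largest-clique-minor bound $\max(m,n)\le (N-1)L+1$ each leave a wide gap below the true requirement. Second, and more subtly, long chains are precisely what let a small-area host qubit set acquire a large edge-boundary, so a boundary-versus-area estimate that ignores the direction of the couplers loses a constant factor and only yields $N-1\gtrsim n/2$. The crux is therefore to carry the horizontal and vertical coupler accounts separately and lock them to the rook placement, so that the counting closes with no remaining slack exactly under the two arithmetic hypotheses $2L\mid m$ and $n$ odd; the odd case is what removes the last half-cell of freedom and makes the bound exact rather than off by a unit.
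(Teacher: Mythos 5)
Your upper bound is fine and agrees with the paper's construction (for $2L \mid m$ the CPCG pattern gives $N = m(n+1)/(2L)$, i.e., $N = n+1$ when $m = 2L$), but the lower-bound plan has a genuine gap, and two of its load-bearing claims fail outright. First, the ``essential Chimera fact'' that only inter-cell couplers can realize the crossing edges is false: intra-cell couplers of a $K_{L,L}$ block routinely join chains belonging to different copies of $K_m$ --- indeed, in the CPCG layout itself every edge of the $K_n$ factor is realized at a bus junction by an \emph{intra-cell} coupler. Once intra-cell couplers are admitted, the coupler budget of $\mathcal{C}_{N',N',L}$ is $\Theta(L^2 N'^2)$, while your required cut has only $m(n^2-1)/4 = L(n^2-1)/2$ crossing couplers (for $m=2L$); this is an order of magnitude too small to force $N' \geq n+1$, and even your refined direction-separated inter-cell budgets of $LN'(N'-1)$ apiece yield only $N' \gtrsim n/2$, as you yourself concede. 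Second, the step meant to close that factor of two --- cell ``saturation'' when $2L \mid m$ plus the rook-type localization --- presupposes exactly the structured, localized embedding you admit must be ruled out: chains of one copy need not fill whole cells, cells may be shared among several copies, and fixing a single guest bipartition gives no control over \emph{where} in the host the crossing couplers sit. A counting argument against one cut of the guest cannot see the host's geometry; what is needed is a statement quantifying over all separations of the host, and that is a connectivity invariant, not an edge count.

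This is precisely why the paper's proof abandons counting in favour of treewidth, which is minor-monotone and therefore handles meandering chains and interleaved copies automatically. Since $K_m \square K_n$ embeds into $\mathcal{C}_{N',N',L}$, it is a minor of it, so $\operatorname{tw}(K_m \square K_n) \leq \operatorname{tw}(\mathcal{C}_{N',N',L}) = N'L$ (the latter value is quoted from \cite{boothby2015_fast}). Because $n$ is odd, a bramble built as in \cite[Lemma 3.2]{lucena2007_achievable} gives $\operatorname{tw}(K_m \square K_n) \geq m(n+1)/2 - 1$; note that your nearly balanced $(n-1)/2$ versus $(n+1)/2$ split is the right intuition, but it must be packaged as a bramble (a family of pairwise touching connected sets with no small hitting set) rather than as a single cut. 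Combining the two bounds gives $N' \geq m(n+1)/(2L) - 1/L = N - 1/L$, and since $L > 1$ and $N, N'$ are integers, $N' \geq N$. The hypothesis $2L \mid m$ enters only to make the ceiling-laden CPCG side length collapse to exactly $m(n+1)/(2L)$ so that the residual $1/L$ is absorbed by integrality, and the oddness of $n$ enters through the bramble order. If you tried to repair your cut argument so that it controls every possible host separation simultaneously, you would in effect be reconstructing this treewidth argument.
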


\begin{proof}
By symmetry, we may assume that $2L$ divides  $m$ and $n$ is odd.
Let us choose $K_m$ to be the graph embedded in a nexus. 
By placing the nexuses on the diagonal, CPCG Embedding described in the previous section embeds $K_m \square K_n$ into
$\mathcal C_{N,N,L}$, where
\begin{align}
	N = \ceil[\bigg]{\frac{\ceil{\frac{m}{L}}}{2}}(n - 1) + \ceil[\bigg]{\frac{m}{L}}
	= m (n+1)  / (2L).
\end{align}

Now suppose that $K_m \square K_n$ can be embedded into $\mathcal C_{N',N',L}$.
To prove optimality we need only show that $N' \geq N$.

The proof uses a treewidth argument.
Let $\operatorname{tw}(G)$ denote the treewidth of a graph $G$.
Since $K_m \square K_n$ can be embedded into $\mathcal C_{N',N',L}$,
the former graph is a minor of the latter, so 
we have the following inequality between their treewidths:
\begin{equation}\operatorname{tw}(K_m \square K_n) \leq \operatorname{tw}(\mathcal C_{N',N',L}).\label{eq:tw}
\end{equation}

On the one hand, since $n$ is odd, we can construct a bramble similar to that given in the proof of \cite[Lemma 3.2]{lucena2007_achievable} to show that
\begin{equation}
m(n + 1)/2 - 1 \leq \textrm{tw}(K_m \square K_n). 
\end{equation}

On the other hand, the treewidth of $\mathcal C_{N',N',L}$ is known to be $N'L$
(this statement is given without proof in \cite{boothby2015_fast} and is confirmed with further explanation in \cite{private_conversation}).

Combining these two treewidth results with (\ref{eq:tw}) gives 
$$m(n + 1)/2 - 1 \leq N'L,$$
which means that
$$N' \geq m(n + 1)/(2L) - 1/L = N - 1/L.$$
But since $L> 1$ and $N$ and $N'$ are positive integers, this implies that $N'\geq N$, as required.
\end{proof}

We believe that the result above also holds for $K_m \square K_n$, where $m$ is divisible by $2L$ and $n \ge m$. However, the proof will require a complicated adaptation of the ideas given in \cite[Lemma 3.2]{lucena2007_achievable} that is beyond the scope of this paper.

As a concrete example, we consider the Chimera structure $\mathcal C_{8,8,4}$ corresponding to a 512-qubit chip. We know that CPCG Embedding can embed $K_8 \square K_7$, and, consequently, any product of the form $K_m \square K_n$, where $m \leq 8$ and $n \leq 7$. Indeed, the treewidth of $\mathcal C_{8,8,4}$ is 32, and $K_8 \square K_7$ has a treewidth smaller than or equal to 31 and is therefore embeddable, whereas any product of the form $K_8 \square K_n$ with $n > 7$ must have a treewidth of at least 35 and is not embeddable.

\subsection{Running time}\label{sec:runtime}

We may assume that the input to our algorithm is a polynomial in doubly indexed variables. For example, in the problem of colouring a graph of $n$ vertices with $m$ colours, the quadratic formulation of the problem contains a polynomial in the variables $x_{ij}$, with $i \in\{1,\dots,n\}$ and $j\in\{1,\dots,m\}$. Then, we consider the Cartesian product of two complete graphs $K_m\square K_n$ to be embedded into a Chimera graph. 

In order to identify the appropriate complete graphs whose Cartesian product contains the input graph, we need $\mathcal{O}\left(n^2m^2\right)$ operations. Furthermore, from our proposed algorithm, the total number of operations needed to embed  a Cartesian product $K_m\square K_n$ into a Chimera graph is $\mathcal O\left(n^2 m^2\right)$. It is worth mentioning that if we consider the input to be a graph with $e$ edges, the number of operations needed to identify an appropriate Cartesian product of two complete graphs is $\mathcal{O}(e)$.

Now suppose a graph $H$ is to be embedded into a graph $G$, and both of them are the inputs to the embedding algorithm proposed by \cite{cai_2014_practical}. Let $n_H$ and $e_H$ denote the number of vertices and edges of graph $H$, and $n_G$ and $e_G$ be the number of vertices and edges of graph $G$, respectively. The running time of the algorithm in \cite{cai_2014_practical} is $\mathcal{O}\left(n_Hn_Ge_H(e_G + n_G\log n_G)\right)$.

\section{Fault tolerance and extensions}

One key to our low-complexity scalable algorithm is to make use of the lattice-like regularity in the target Chimera graph. Although the numerical results show significant improvement over general heuristics used for embedding into a perfectly regular Chimera graph, we have thus far not accounted for potential defects and their impact on CPCG Embedding. One can, of course, argue that such defects are merely a temporary nuisance which will eventually be eliminated as the technology matures. Nevertheless, for the method to be of immediate practical use, the general case of a target graph with inoperable qubits and couplers needs to be considered. Unfortunately, these inoperable qubits break the perfect regularity of the Chimera graph, the very feature on which our approach is based. Figure \ref{fig:Chip509} depicts an actual instance of a D-Wave chip with inoperable qubits. This chip with a Chimera structure $\mathcal{C}_{8,8,4}$ with 509 working qubits was installed at NASA's Ames Research Center \cite{perdomo2015_DWtuning}, and was only recently replaced by a newer D-Wave 2X system. 

\begin{figure}[htbp]
\centering
\includegraphics[scale=.20]{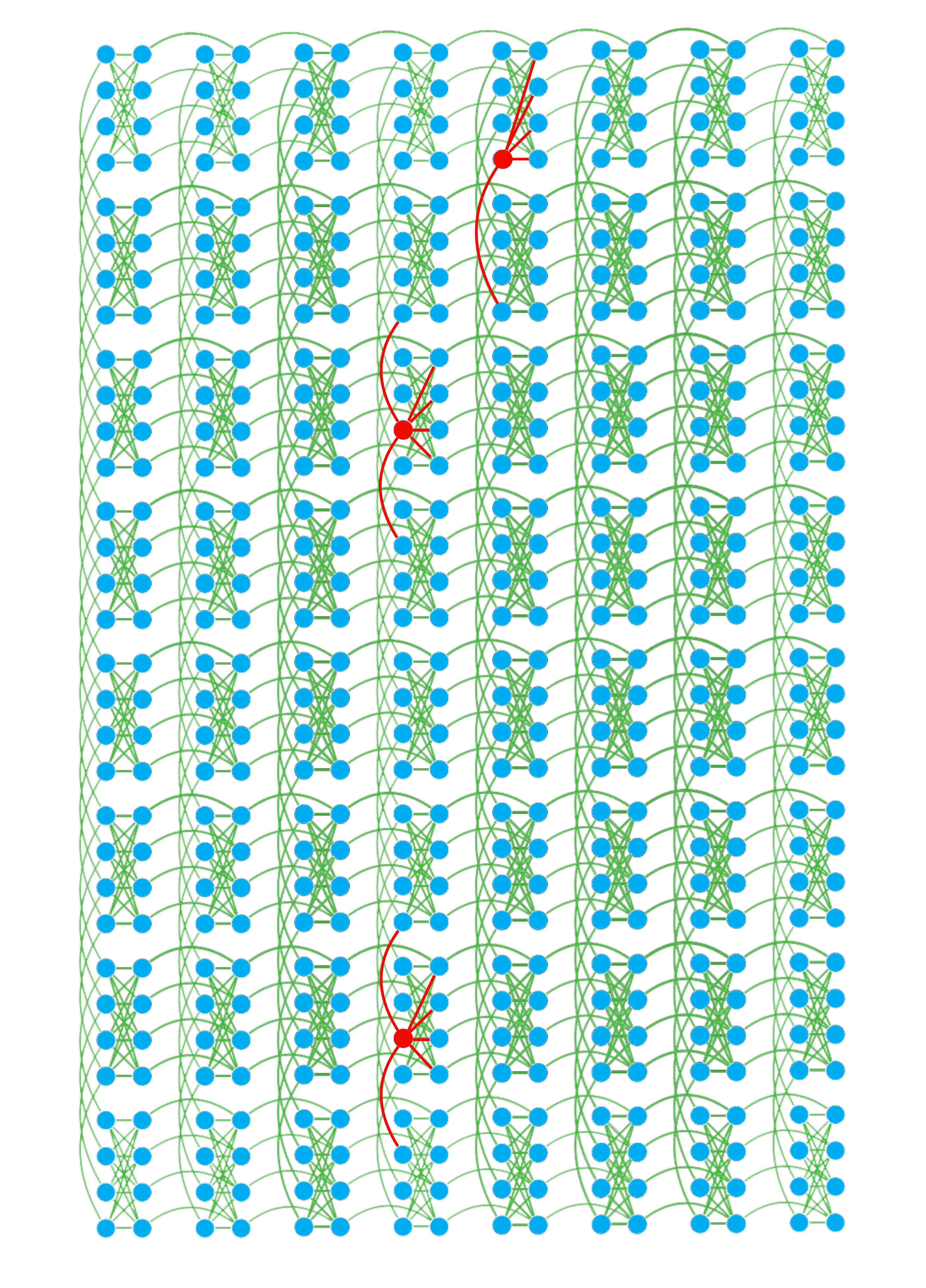}
\caption{A graphical representation of the connectivity map of the 512-qubit chip with Chimera architecture $\mathcal{C}_{8,8,4}$, and 3 inoperable qubits and associated couplers shown in red.}
\label{fig:Chip509}
\end{figure}

\subsection{Presentation of the fault-tolerant method}

The issue of having inoperable qubits can be addressed at the expense of adding more complexity to our scalable embedding approach. One simple idea is to use the CPCG embedding of the problem on an ideal solver as a starting point and apply small modifications so that a valid embedding that circumvents the irregularities caused by inoperable qubits is reached. We expect that such a solution should achieve reasonable performance on chips with high qubit yields, while a low qubit yield would lead to substantial degradation of both the embeddability and the embedding quality. We describe below how this type of extension can be implemented.

Using the embedding pattern on a perfect chip as a starting point, we address each nexus instance in turn. We begin with the first nexus and look at the capacity of its constituting blocks in each direction. The capacity of block $(i,j)$ in a given direction is denoted by $c^{i,j}_\textrm{direction}$, where $(i,j)$ are indices on a two-dimensional grid. The block capacities determine how many  paths for variables can run through a group of adjacent vertical or horizontal blocks to propagate a set of variables (i.e., the bus capacity). The presence of inoperable qubits along these directions will usually result in a reduced capacity. We then extend the size of the nexus until the relevant blocks along the vertical (horizontal) direction can form a bus with sufficient capacity. Then a variant of triangular embedding is used to embed the same complete graph in the newly extended space for the nexus. As a result of the nexus extension, we need to shift the other nexus instances appropriately. Although we have just described how a nexus extension can circumvent an inoperable qubit along a bus path, this shape modification can also help with embedding a nexus when there are inoperable qubits within the nexus boundaries. For lower qubit yields, triangular embedding might fail to embed a nexus instance regardless of the number of shifts and extensions. In such situations, a more complex nexus embedding algorithm should be used to compensate for the high irregularity in the target graph. Figure \ref{fig:nexus_emb_and_bus_config_509}a illustrates how a simple nexus extension can address the problem caused by having an inoperable qubit within the nexus, and Figure \ref{fig:nexus_emb_and_bus_config_509}b provides a more complete example by showing which modifications need to be performed to embed a $K_8 \square K_6$ on the specific \mbox{509-qubit} chip in Figure \ref{fig:Chip509}.  As the figure illustrates, the shift-and-extension method is applied to bypass the columns and rows of lower bus capacity caused by inoperable qubits. In the next section, we provide the numerical analysis of the performance of this algorithm compared to the $\texttt{find\_embedding()}$ heuristic \cite{cai_2014_practical} for this specific chip architecture. The pseudo-code in Algorithm~\ref{alg:fault_tolerant} provides a few more details of this proof of concept for this simple fault-tolerant method.

\begin{figure}
\centering     
\subfigure[A possible $K_8$ nexus extension with an inoperable qubit]{\includegraphics[scale=.38]{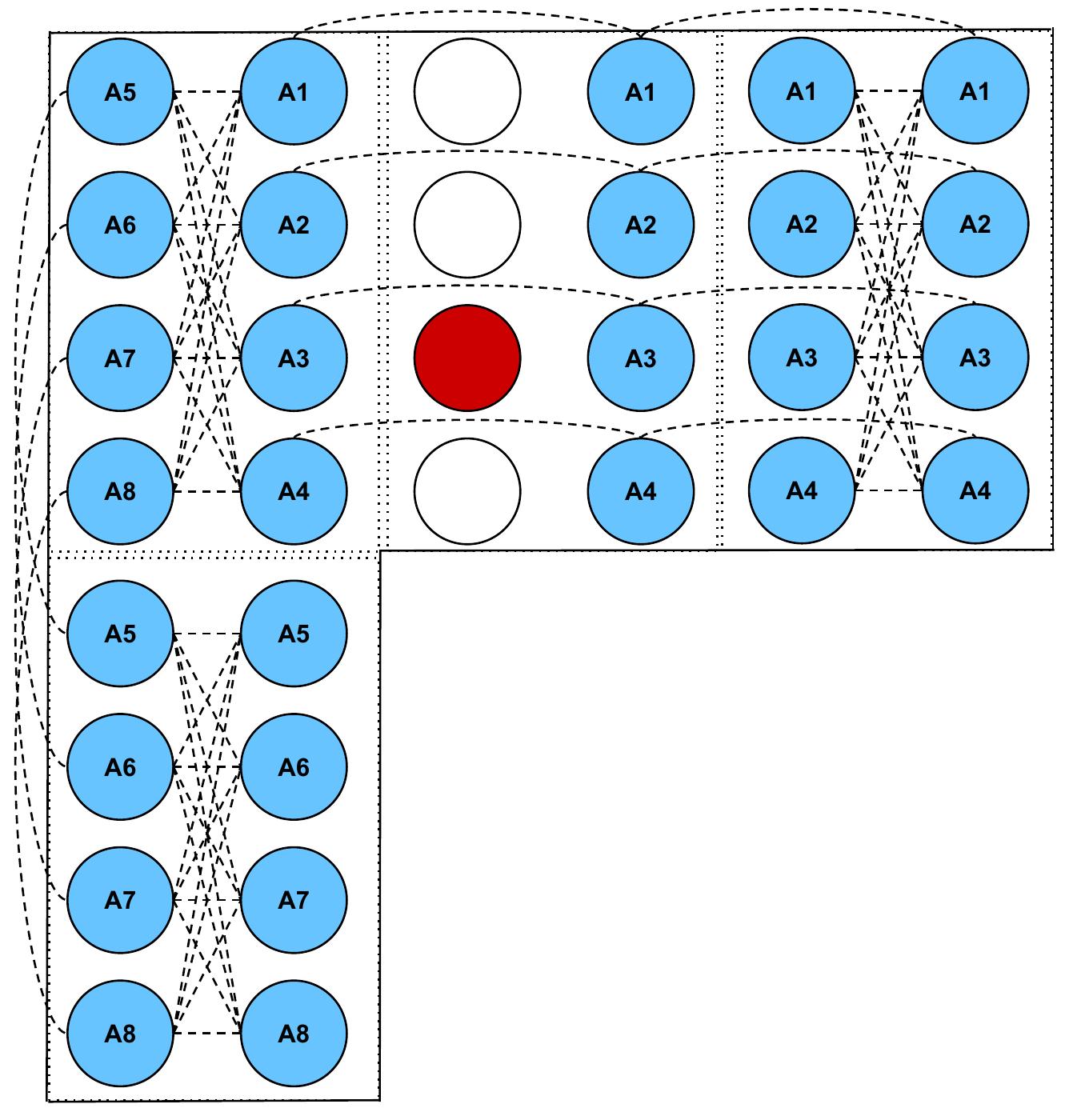} }
\mbox{ }
\subfigure[A $K_8$ nexus on a chip with inoperable qubits]{\includegraphics[scale=.5]{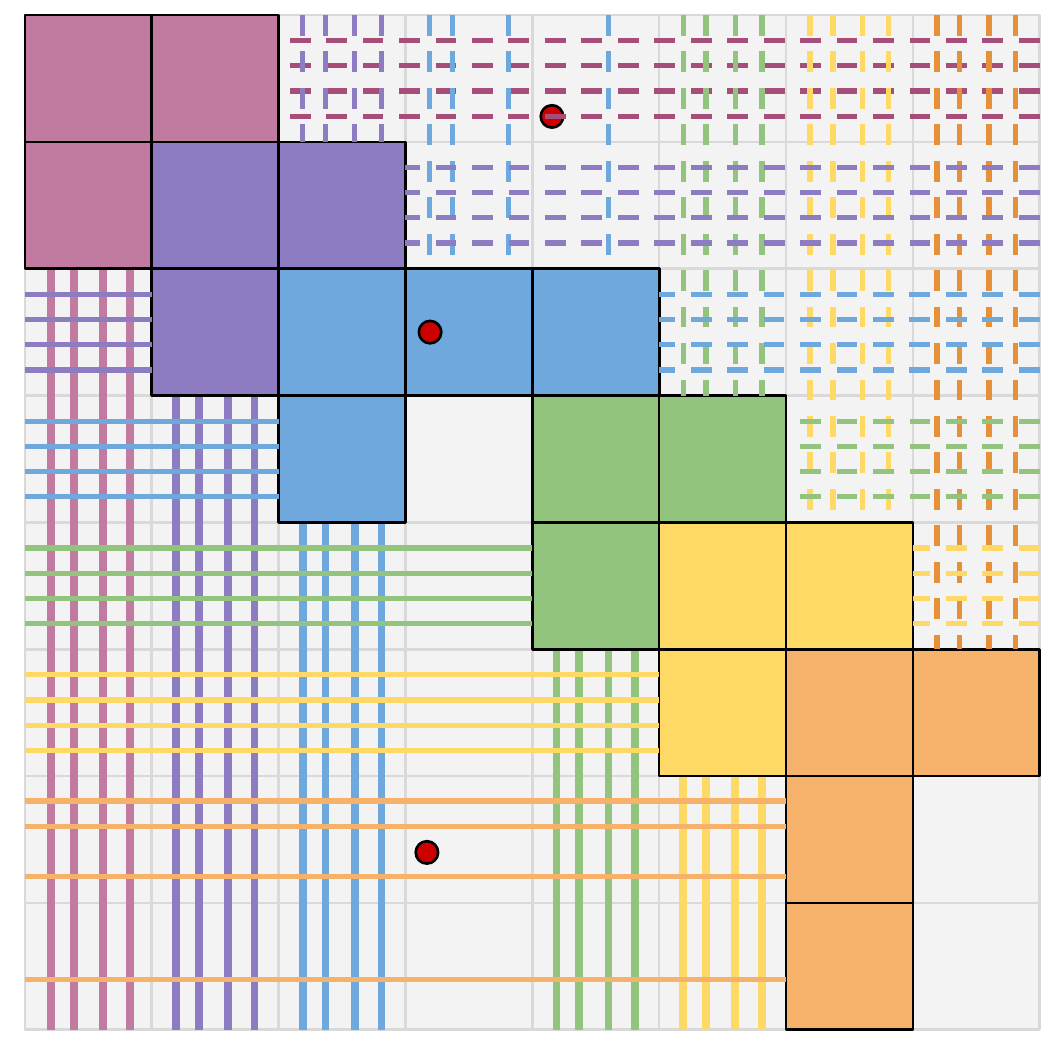}}
\caption{(a) Example of a nexus modification using a horizontal extension to avoid an inoperable qubit. The original nexus shape is the same as in Figure \ref{fig:nexus_emb_and_bus_config}a and the modification is needed to account for the inoperable qubit $A_7$, shown in red. (b) A valid embedding for the input problem $K_8 \square K_6$ into a 512-qubit chip $\mathcal{C}_{8,8,4}$ with 3 inoperable qubits, formerly installed at NASA's Ames Research Center. L-shapes with various colours are modified copies of the nexus chosen in Figure \ref{fig:nexus_emb_and_bus_config}a for embedding copies of $K_8$. The red circles represent inoperable qubits. The blue nexus is extended due to an inoperable qubit inside the nexus area and the orange nexus is extended because of a bus capacity problem caused by an inoperable qubit.}
\label{fig:nexus_emb_and_bus_config_509}
\end{figure}

\begin{algorithm}[htbp]
\scriptsize
\begin{minipage}{\textwidth}

\KwData{Adjacency matrix of the target Chimera graph $A_{\mathcal{C}_{M,N,L}}$ with inoperable qubits, dimensions of the \mbox{CPCG $(\alpha,\beta)$ }}
 \KwResult{An embedding $\mathcal{E}_{(\alpha,\beta)}$ in the target Chimera graph}
 \textbf{Initialization}: \\
 $C[i,j] \leftarrow$ \textbf{Calculate} the capacity vector $(c_{vertical}^{i,j}, c_{horizontal}^{i,j})$ for each block $[i,j]$\;
$\mathcal{E}_{(\alpha,\beta)}^{*} \leftarrow$ \textbf{Load} a scalable embedding pattern on the ideal target graph $A_{\mathcal{C}_{M,N,L}}^{*}$\;
$diagonals\leftarrow \{(Down, East), (Up, East)\}$\;
\For{direction $\in$ diagonals}{
	$(ROW,COL) \leftarrow$ coordinates of the corner block of the current $direction$\; 
	 \For{$nexus\in \mathcal{E}_{(\alpha,\beta)}^{*}$ }{
	 	  $nexus\_shape \leftarrow$ collection of blocks used by $nexus\in \mathcal{E}_{(\alpha,\beta)}^{*}$\;
		  \textbf{Shift} to the current available position $(ROW,COL)$\;
		   \While{the nexus is not embedded}{
		   $C_{nexus} \leftarrow$ required capacity by triangular embedding of $nexus$ based on $nexus\_shape$\; 
			  \eIf{$C[ROW,COL] < C_{nexus}$}{
				    $extend \leftarrow$ \textbf{identify} the direction to extend based on $sign(C_{nexus} - C[ROW,COL])$\;
				   \textbf{extend} to $(C_{nexus} - C[ROW,COL])$ blocks toward $extend$ direction\;
				   $nexus\_embeddability \leftarrow$\textbf{Call} $triangular \_ embedding()$ on updated $nexus\_shape$\;
				   \If{$nexus\_embeddability$}{
				   \textbf{Locate} $nexus$ on $A_{C_{M,N,L}}$\;
				   \textbf{Update} $\mathcal{E}_{(\alpha,\beta)}$ and $(ROW,COL)$\;
				   \textbf{continue} to next $nexus$\;
				   }
				   }{
				   \textbf{Locate} $nexus$ on $A_{\mathcal{C}_{M,N,L}}$\;
				   \textbf{Update} $\mathcal{E}_{(\alpha,\beta)}$ and $(ROW,COL)$\;
			  }
		  }
	}
 }
\end{minipage}
\caption{Fault-tolerant CPCG Embedding based on shifts and extensions}
\label{alg:fault_tolerant}
\end{algorithm}

\subsection{Comparison to other embedding methods}

We have tested the simple fault-tolerant algorithm to embed the family of $K_8 \square K_n$ problems on the quantum annealer described in Figure \ref{fig:Chip509}. We again compare it to the results produced by the \texttt{find\_embedding()} heuristic run for 1000 seconds and each problem was repeated 100 times to collect statistics. The other parameters used are the same as in Section \ref{sec:discussion}. Figure \ref{fig:nexus_emb_and_bus_config_509}  shows an embedding of the maximum problem size embeddable on this chip with CPCG Embedding. The \mbox{\texttt{find\_embedding()}} heuristic can also embed this problem size, albeit with a success rate of about 18\%. Unsurprisingly, the success rate of a heuristic method such as \texttt{find\_embedding()} is not greatly affected by a small number of inoperable qubits. Figure \ref{fig:emb_figure_509} illustrates how the success probability of \texttt{find\_embedding()} still drops faster than CPCG Embedding with increasing problem size. We expect our previous observation that the advantage of CPCG Embedding becomes more prominent for larger chip sizes to hold for high qubit yields. In other words, CPCG Embedding should outperform \texttt{find\_embedding()} by a larger margin for larger target architectures despite the presence of irregularities caused by a low density of inoperable qubits. 

We also note that the embedding quality of CPCG embeddings remains superior with respect to the number of required qubits and chain length distribution. These are compared in Figures \ref{fig:num_qubits_509} and \ref{fig:chain_509} for the chip with 509 qubits. Here, too, the results are not shown for $K_8\square K_6$ for \texttt{find\_embedding()} because they were skewed due to a lower embedding success rate. Again, we observe that the \texttt{find\_embedding()} heuristic is not very sensitive to this small density of inoperable qubits, as the chain length distribution and qubit counts are almost identical to the ideal case. Given that CPCG Embedding relies on the regularity of the target graph, unlike  \texttt{find\_embedding()}, we unsurprisingly observe a higher sensitivity to the irregularities caused by inoperable qubits. This results in some degradation in the embedding quality. The chains in each successful embedding are no longer equal because the algorithm needs to route around inoperable qubits, resulting in the spreading out of the distribution. For the same reason, we observe a larger qubit count for the embeddings on the real chip compared to the ideal case. Despite the changes, both the required number of qubits and the distribution of chain lengths remain significantly superior to \texttt{find\_embedding()}. It is true that we are considering a high qubit yield, but Figures \ref{fig:num_qubits_509} and \ref{fig:chain_509} indicate that CPCG Embedding still has potentially enough of an advantage over \texttt{find\_embedding()} to remain the preferable method, even for lower qubit yields. Obviously, a crossing point is expected and methods like \texttt{find\_embedding()} remain indicated for irregular target graphs. 

We included this simple algorithm to show the possibility of modifying  our approach to be used for real chips. However, this approach seems intuitively wasteful as it readily discards large blocks of qubits. There exists an obvious trade-off between the complexity of the fault-tolerant embedding algorithm and its performance in terms of embedding success rate and embedding quality. Work on a refined approach, still based on modifying the ideal CPCG embedding pattern, is ongoing and will be presented elsewhere. We believe that improvements to the techniques described herein should allow us to achieve a higher tolerance to irregularities while preserving most of the desirable features such as running time and embedding properties.    

\begin{figure}[htbp!]
\begin{centering}
\includegraphics[scale=0.52]{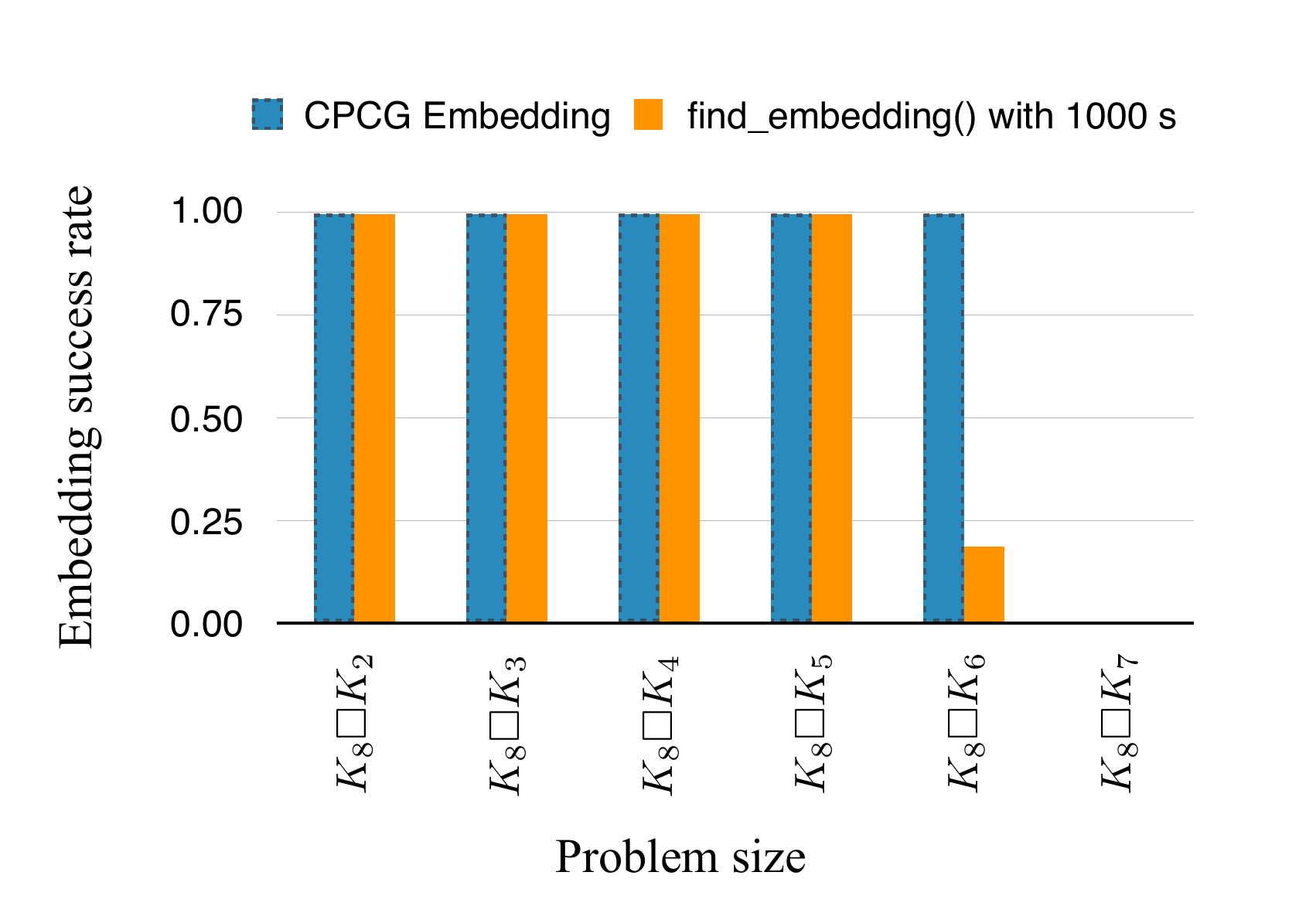}
\par\end{centering}
\caption{The embedding success rate for embedding Cartesian products of complete graphs into a chip with inoperable qubits using D-Wave's \texttt{find\_embedding()} heuristic for 1000 seconds (orange) and our systematic CPCG Embedding (dark blue) for the case of $K_8 \square K_n$ as a function of $n$. The $\mathcal{C}_{8,8,4}$ chip used has 509 working qubits out of 512 and is described in Figure \ref{fig:Chip509}.  The largest problem embedded by both approaches is $K_8 \square K_6$ with a success rate of 19\% for \texttt{find\_embedding()} and 100\% for CPCG Embedding. }
\label{fig:emb_figure_509}
\end{figure}

\begin{figure}[hbtp!]
\begin{centering}
\includegraphics[width=\textwidth]{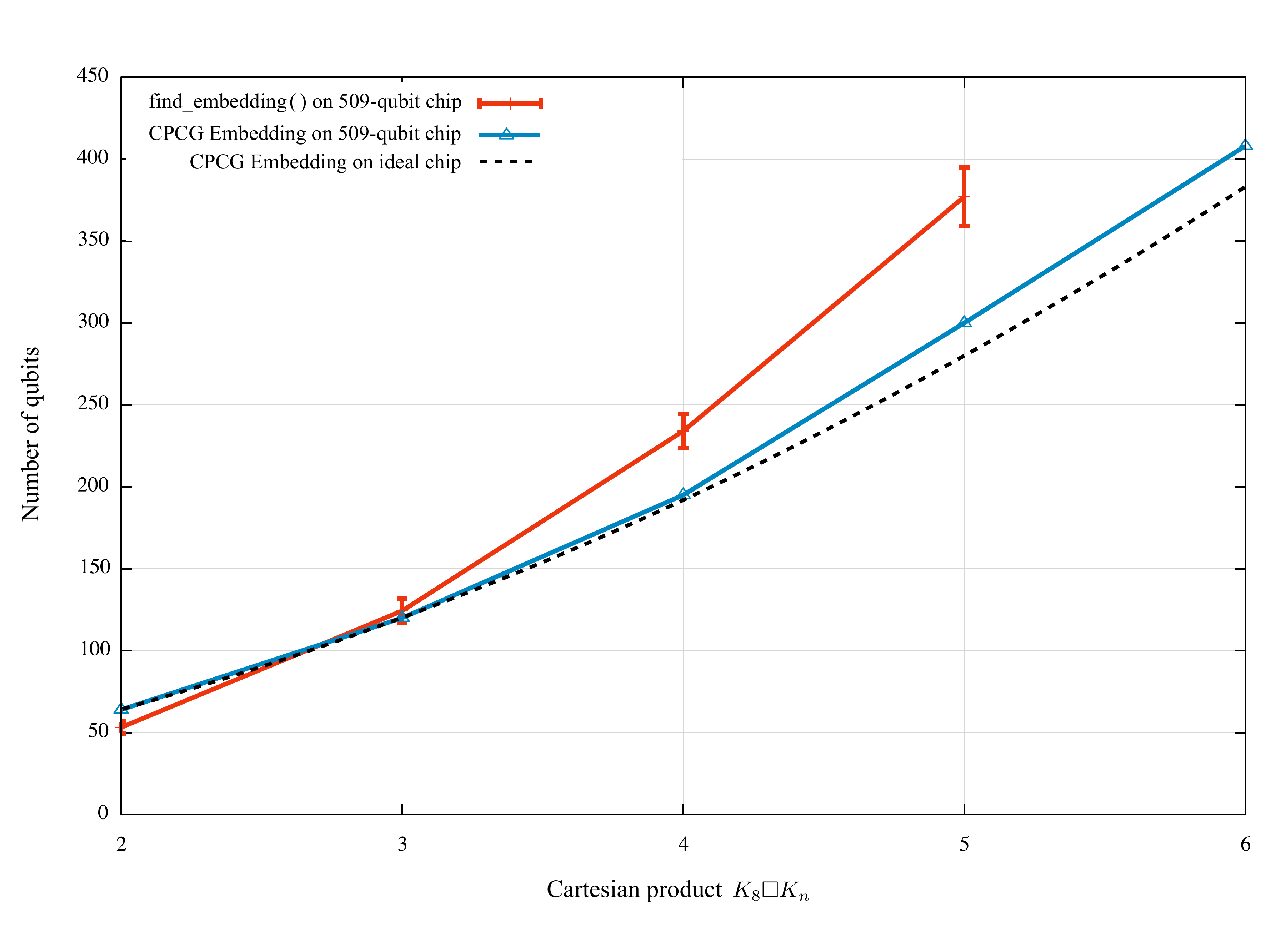}
\par\end{centering}
\caption{Average number of qubits used for embedding Cartesian products of complete graphs into a chip with inoperable qubits using D-Wave's \texttt{find\_embedding()} heuristic for 1000 seconds  (red) and our systematic CPCG Embedding (dark blue) for the case of $K_8 \square K_n$ as a function of $n$. The $\mathcal{C}_{8,8,4}$ chip used has 509 working qubits out of 512 and is described in Figure \ref{fig:Chip509}. CPCG Embedding results for a perfect chip of the same size are also shown (dotted black line). }
\label{fig:num_qubits_509}
\end{figure}

\begin{figure}[hbtp!]
\begin{centering}
\includegraphics[width=\textwidth]{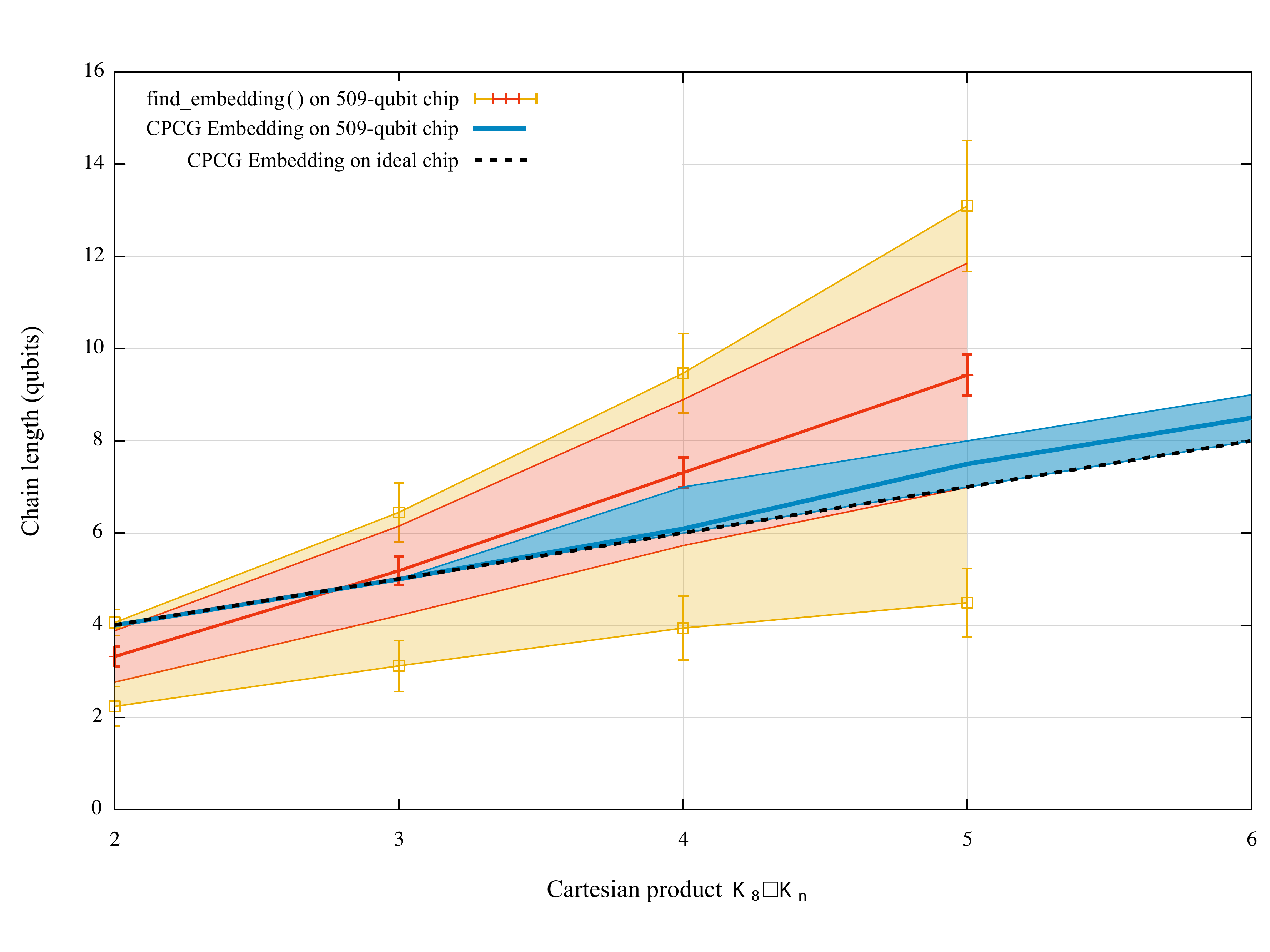}
\par\end{centering}
\caption{Chain length for embedding Cartesian products of complete graphs into a chip with inoperable qubits using D-Wave's \texttt{find\_embedding()} heuristic for 1000 seconds (yellow and red) and our systematic CPCG Embedding (blue) for the case of $K_8 \square K_n$ as a function of $n$. The $\mathcal{C}_{8,8,4}$ chip used has 509 working qubits out of 512 and is described in Figure \ref{fig:Chip509}.  One CPCG embedding instance for each problem size is shown in blue with the dark blue line showing the average chain length and the shaded blue area representing the spread between the minimum and maximum chain lengths. The spread of chain lengths produced by \texttt{find\_embedding()} is illustrated by averaging the mean (central red line), maximum (upper yellow line), and minimum (lower yellow line) chain lengths over 100 embeddings. The average standard deviation (also in red) of the chain length is also shown such that the red shaded region illustrates where 65\% of the chains can typically be found.}
\label{fig:chain_509}
\end{figure}

\section{Conclusion}

Motivated by several interesting combinatorial problems such as graph colouring and graph partitioning, we proposed a systematic, deterministic, and scalable embedding algorithm for embedding the Cartesian product of two complete graphs into D-Wave Systems' Chimera hardware graph. To develop this method, we exploited the intrinsic structure of a class of combinatorial optimization problems as well as the structure of the Chimera graph. Although more-general (and perforce slower) methods will remain necessary, it is with such application-specific algorithms that the best performance can be achieved and we expect similar studies to follow suit in the near future. In the case of our CPCG Embedding algorithm, we not only showed advantageous running time scaling, and how embedding patterns can be cost-effectively scaled up for larger chip architectures, we also proved CPCG Embedding to be optimal in specific cases. Beyond the better embedding success rate achieved, the quality of the embeddings generated, as measured by the usual empirical factors, is superior to other methods. Indeed, CPCG Embedding produces equal-length chains on an ideal Chimera chip and uses far fewer physical qubits. Such improvements in the quality of embedding can play a major role in reducing the time to solution when solving problems. Given the algorithm's reliance on the regularity of the target architecture, it is natural to expect a degradation of performance in the presence of defects. Although we did not explore optimal modifications to the method to handle  inoperable qubits and couplers, we presented a simple version of the algorithm for those cases and tested it on the $\mathcal{C}_{8,8,4}$ 512-qubit NASA chip with 509 working qubits, described in \cite{perdomo2015_DWtuning}. The results suggest that for high qubit yields, CPCG Embedding will retain some advantage in both embedding success rates and quality indicators over more-general heuristic methods.

%%===================================================================
\begin{acknowledgements}
The authors are grateful to Marko Bucyk for editing the manuscript, and to Brad Woods, Natalie Mullin, Abbas Mehrabian, and Robyn Foerster for useful discussions and input. 
\end{acknowledgements}

\bibliographystyle{spmpsci}     
\bibliography{CPCG}

\end{document}